\newtheorem{theorem}{Theorem}[section]
\theoremstyle{definition}
\theoremstyle{remark}
\newtheorem{remark}[theorem]{Remark}
\newtheorem{conjecture}[theorem]{Conjecture}
\numberwithin{equation}{section}
\begin{document}

 \title[On the sum of linear coefficients]{On the Sum of Linear Coefficients of a Boolean Valued Function}

%    Only \author and \address are required; other information is
%    optional.  Remove any unused author tags.

%    author one information
% \author[short version for running head]{name for top of paper}
\author{Sumit Kumar Jha}
\address{Center for Security, Theory, and Algorithmic Research\\ 
International Institute of Information Technology, 
Hyderabad, India}
\curraddr{}
\email{kumarjha.sumit@research.iiit.ac.in}
\thanks{}

%\subjclass[2000]{Primary }
%    The 2010 edition of the Mathematics Subject Classification is
%    now available.  If you are citing a classification from the
%    new scheme, use the following input coding instead.
%\subjclass[2010]{Primary }

\date{}

\begin{abstract}
Let $f:\{-1,1\}^{n}\rightarrow \{-1,1\}$ be a Boolean valued function having total degree $d$. Then a conjecture due to Servedio and Gopalan asserts that $\sum_{i=1}^{n}\widehat{f}(i)\leq \sum_{j=1}^{d}\widehat{\text{Maj}}_{d}(j)$ where $\text{Maj}_{d}$ is the majority function on $d$ bits. Here we give some alternative formalisms of this conjecture involving the discrete derivative operators on $f$.
\end{abstract}

\maketitle
\section{Introduction: Fourier Analysis of Boolean Functions}
We are concerned with \emph{Boolean-valued} functions $f:\{-1,1\}^{n}\rightarrow \{-1,1\}$ which form a subset of \emph{Boolean} functions $f:\{-1,1\}^{n}\rightarrow \mathbb{R}$. Every Boolean function $f:\{-1,1\}^{n}\rightarrow \mathbb{R}$ has a unique \emph{Fourier expansion} given by
$$f(x)=\sum_{S\subseteq [n]}\widehat{f}(S)\prod_{i\in S}x_{i},$$
where the real numbers $\widehat{f}(S)$ are the \emph{Fourier coefficients} of $f$ given by the formula
$$ \widehat{f}(S)=\textbf{E}\left[f(x)\prod_{i\in S}x_{i}\right].$$
(Here and everywhere else in the paper, the expectation $\textbf{E}[\cdot]$ is with respect to the uniform probability distribution on $\{-1,1\}^{n}$.) The \emph{Parseval's identity}  is the fact that $\sum_{S\subseteq [n]}\widehat{f}(S)^{2}=\textbf{E}[f(x)^{2}].$ In particular, if $f$ is boolean-valued then this implies that $\sum_{S\subseteq [n]}\widehat{f}(S)^{2}=1$.\par
Given $f:\{-1,1\}^{n}\rightarrow \mathbb{R}$ and $i\in [n]$, we define the \emph{discrete derivative} $\partial_{i}f:\{-1,1\}^{n}\rightarrow \mathbb{R}$ by
$$\partial_{i}f(x)=\frac{f(x_{1},x_{2},\cdots ,1,x_{i+1},\cdots ,x_{n})-f(x_{1},x_{2},\cdots ,-1,x_{i+1},\cdots ,x_{n})}{2}.$$
The \emph{influence} of the $i$th coordinate on $f$ is defined by
$$\textbf{Inf}_{i}[f]=\textbf{E}[(\partial_{i}f)^{2}]=\sum_{S\ni i}\widehat{f}(S)^{2}.$$
In the particular case when $f$ is Boolean-valued, the derivative $\partial_{i}f$ is $\{-1,0,1\}$-valued. The \emph{total influence} of $f$ is 
$$\textbf{Inf}[f]=\sum_{i=1}^{n}\textbf{Inf}_{i}[f]=\sum_{S\subseteq [n]}|S|\cdot \widehat{f}(S)^{2}.$$
The \emph{total degree} of $f$ is defined by
$$\text{deg}(f)=\max\{|S|:\widehat{f}(S)\neq 0\}.$$
Note that for a Boolean-valued function $f:\{-1,1\}^{n}\rightarrow \{-1,1\}$
$$\textbf{Inf}[f]=\sum_{S\subseteq [n]}|S|\cdot \widehat{f}(S)^{2}\leq \text{deg}(f)\cdot \sum_{S\subseteq [n]}\widehat{f}(S)^{2}=\text{deg}(f),$$
where we used the Parseval's identity to deduce the last step.\par 
The linear coefficients of $f$ are the $n$ coefficients $\widehat{f}(\{1\}), \widehat{f}(\{2\}),\cdots,\widehat{f}(\{n\})$, and hereon we omit writing the curly braces inside to denote them. We are concerned with the following conjecture here:
\begin{conjecture}[Gopalan-Servedio \cite{open}]
\label{conj1}
Let $f:\{-1,1\}^{n}\rightarrow \{-1,1\}$ have total degree $d$. Let $\text{Maj}_{d}:\{-1,1\}^{d}\rightarrow \{-1,1\}$ be defined by $\text{Maj}_{d}(x)=\text{sgn}(x_{1}+x_{2}+\cdots+x_{d})$ with $\text{sgn}(0)=-1$. Then
$$\normalfont \sum_{i=1}^{n}\widehat{f}(i)\leq \sum_{j=1}^{d}\widehat{\text{Maj}}_{d}(j).$$
\end{conjecture}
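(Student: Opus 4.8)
The plan is to rewrite the target as a single correlation and then to study which degree-$d$ Boolean function maximizes it. Summing the defining formula for the Fourier coefficients and using $\widehat f(i)=\mathbf E[\partial_i f]$ gives
$$\sum_{i=1}^{n}\widehat f(i)=\sum_{i=1}^{n}\mathbf E[f(x)\,x_{i}]=\mathbf E\Big[f(x)\,(x_{1}+\cdots+x_{n})\Big]=\mathbf E\Big[\sum_{i=1}^{n}\partial_{i}f(x)\Big].$$
Writing $s(x)=x_{1}+\cdots+x_{n}$, the quantity to be maximized is $\mathbf E[f\cdot s]$, while the right-hand side of the conjecture is exactly $\sum_{j=1}^{d}\widehat{\text{Maj}}_{d}(j)=\mathbf E[\lvert x_{1}+\cdots+x_{d}\rvert]$. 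With no restriction on the degree, $\mathbf E[f\cdot s]$ is maximized pointwise by $f=\operatorname{sgn}(s)=\text{Maj}_{n}$, so the whole content of the statement is that the constraint $\deg f\le d$ must pull the optimum down precisely to the value achieved by majority on $d$ bits.

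First I would try to normalize an extremal $f$. Coordinatewise monotonization $M_{i}f=\mathbf E_{i}f+x_{i}\lvert\partial_{i}f\rvert$ satisfies $\widehat{M_{i}f}(i)=\mathbf E[\lvert\partial_{i}f\rvert]\ge\widehat f(i)$ and leaves every $\widehat f(j)$ with $j\ne i$ unchanged, so it can only increase $\mathbf E[f\cdot s]$; iterating would make $f$ monotone, after which $\mathbf E[f\cdot s]=\textbf{Inf}[f]$. Since $s$ and the two constraints are permutation-invariant, I would then hope to symmetrize and reduce to a one-parameter problem in the Hamming weight $k=\#\{i:x_{i}=+1\}$, where a symmetric $f$ is a sign pattern $g\colon\{0,\dots,n\}\to\{-1,1\}$ with $\mathbf E[f\cdot s]=\mathbf E_{k}[g(k)\,(2k-n)]$; against this sum the majority pattern $g(k)=\operatorname{sgn}(2k-n)$ is pointwise optimal, and one would compare an arbitrary admissible $g$ to it.

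The main obstacle is that the degree constraint defeats every naive normalization. Monotonization raises the right coefficients but is \emph{not} degree-preserving: for Boolean $f$ one has $\lvert\partial_{i}f\rvert=(\partial_{i}f)^{2}$ of degree up to $2(d-1)$, so $M_{i}f$ may have degree as large as $2d-1$. Symmetrizing by averaging over the symmetric group restores symmetry but destroys Boolean-valuedness, and even once one is reduced to symmetric functions the bound $\deg f\le d$ only loosely limits the number of relevant variables, leaving a delicate window in which a smooth symmetric threshold on slightly more than $d$ coordinates might compete with $\text{Maj}_{d}$. The crux is therefore to find a transformation toward $\text{Maj}_{d}$ that simultaneously (i) does not decrease $\mathbf E[f\cdot s]$ and (ii) genuinely preserves $\deg f\le d$, or else to produce a dual certificate for the inequality. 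The latter seems hard because the natural convex relaxation $\{f\in\operatorname{span}(\prod_{i\in S}x_{i}:\lvert S\rvert\le d):\lVert f\rVert_{\infty}\le1\}$ has extreme points that are not $\pm1$-valued, so its linear optimum need not equal $\mathbf E[\lvert x_{1}+\cdots+x_{d}\rvert]$ and a pure duality argument would not exploit the essential hypothesis that $f$ is Boolean-valued. Essentially all the difficulty concentrates in reconciling the degree bound with the extremal geometry, which is presumably why the statement is still only a conjecture; the derivative reformulation $\mathbf E[\sum_{i}\partial_{i}f]$ is meant to make exactly this interaction more tractable.
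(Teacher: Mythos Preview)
The statement you are addressing is labeled a \emph{Conjecture} in the paper, and the paper does not contain a proof of it. The paper only records two things about it: the remark that the case $\deg f=n$ is trivial via $\mathbf E[f\cdot s]\le\mathbf E[\lvert s\rvert]$, and a theorem giving equivalent reformulations in terms of $\textbf{Pr}[\partial_i f=\pm1]$ and $\textbf{Inf}[f]$. So there is no ``paper's own proof'' to compare against.

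Your proposal is not a proof either, and you essentially say so yourself. You correctly rewrite $\sum_i\widehat f(i)=\mathbf E[f\cdot s]$ and identify $\sum_j\widehat{\text{Maj}}_d(j)=\mathbf E[\lvert x_1+\cdots+x_d\rvert]$ (which is exactly the content of the paper's trivial remark when $d=n$). You then propose monotonization and symmetrization, but immediately and correctly observe that monotonization can double the degree (since $\lvert\partial_i f\rvert=(\partial_i f)^2$), that symmetrizing destroys Boolean-valuedness, and that the convex relaxation has non-Boolean extreme points. None of these obstacles is overcome, and you close by noting that ``the statement is still only a conjecture.'' That is an honest diagnosis, but it means the proposal contains no argument establishing the inequality; it is a survey of natural attacks together with the reasons they stall. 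The genuine gap is simply that no step in your outline actually exploits the degree bound in a way that forces the optimum down to $\mathbf E[\lvert x_1+\cdots+x_d\rvert]$, and you have not produced either a degree-preserving normalization toward $\text{Maj}_d$ or a dual certificate.
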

\begin{remark}
The conjecture is trivial when $\text{deg}(f)=n$ since
$$\sum_{i=1}^{n}\widehat{f}(i)=2^{-n}\sum_{x\in \{-1,1\}^{n}}f(x)\cdot (x_{1}+x_{2}+\cdots+x_{n})\leq 2^{-n}\sum_{x\in \{-1,1\}^{n}}|x_{1}+x_{2}+\cdots+x_{n}|=\sum_{j=1}^{n}\widehat{\text{Maj}}_{n}(j).$$
\end{remark}
\section{Alternative Formalisms of the conjecture}
\begin{theorem}
The Conjecture \ref{conj1} is equivalent to each of the following inequalities
$$\normalfont \textbf{Inf}[f]-\textbf{Inf}[\text{Maj}_{d}] \leq 2 \cdot \sum_{k=1}^{n}\textbf{Pr}\left[\partial_{k} f=-1\right],$$
$$\normalfont \sum_{k=1}^{n}\left(\textbf{Pr}\left[\partial_{k} f=1\right]-\textbf{Pr}\left[\partial_{k} \text{Maj}_d=1\right]\right)\leq \sum_{j=1}^{n}\textbf{Pr}\left[\partial_{j} f=-1\right],$$
$$\normalfont 2\cdot \sum_{k=1}^{n}\textbf{Pr}\left[\partial_{k} f=1\right]\leq \textbf{Inf}[f]+\textbf{Inf}[\text{Maj}_d].$$
\end{theorem}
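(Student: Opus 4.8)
The plan is to translate all four statements into the same elementary quantities built from the discrete derivatives, and then observe that the three displayed inequalities are algebraically identical to the conjecture. Since $f$ is Boolean-valued, each $\partial_k f$ takes values in $\{-1,0,1\}$, and this is the fact that drives everything. Writing $p_k^{+}=\textbf{Pr}[\partial_k f=1]$ and $p_k^{-}=\textbf{Pr}[\partial_k f=-1]$, the relation $\widehat{f}(k)=\textbf{E}[\partial_k f]$ — which holds because the constant term of the Fourier expansion of $\partial_k f$ is exactly $\widehat{f}(k)$ — gives $\widehat{f}(k)=p_k^{+}-p_k^{-}$, while $\textbf{Inf}_k[f]=\textbf{E}[(\partial_k f)^2]$ together with the fact that $(\partial_k f)^2$ is the indicator of $\{\partial_k f\neq 0\}$ gives $\textbf{Inf}_k[f]=p_k^{+}+p_k^{-}$. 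Summing over $k$ and abbreviating $A=\sum_k p_k^{+}$ and $B=\sum_k p_k^{-}$, I obtain the two identities $\sum_k\widehat{f}(k)=A-B$ and $\textbf{Inf}[f]=A+B$.

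The second ingredient is the special structure of $\text{Maj}_d$. Since majority is monotone, each derivative $\partial_k\text{Maj}_d$ is nonnegative, so $\textbf{Pr}[\partial_k\text{Maj}_d=-1]=0$; moreover $\partial_k\text{Maj}_d=0$ for $k>d$, so any sum over $k=1,\dots,n$ collapses to a sum over the $d$ relevant coordinates. Applying the two identities above to $\text{Maj}_d$ and setting $A'=\sum_k\textbf{Pr}[\partial_k\text{Maj}_d=1]$, the vanishing of the ``$=-1$'' probabilities yields the key equality $\sum_j\widehat{\text{Maj}}_d(j)=A'=\textbf{Inf}[\text{Maj}_d]$; that is, for the majority function the sum of linear coefficients and the total influence coincide. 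With these reductions in hand, Conjecture \ref{conj1} reads simply $A-B\le A'$.

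It then remains to rewrite each of the three displayed inequalities in the variables $A,B,A'$ and verify that each collapses to $A-B\le A'$. The first has left-hand side $\textbf{Inf}[f]-\textbf{Inf}[\text{Maj}_d]=(A+B)-A'$ and right-hand side $2B$, so it is $(A+B)-A'\le 2B$, i.e.\ $A-B\le A'$. The second has left-hand side $A-A'$ (using that the $\text{Maj}_d$ terms sum to $A'$) and right-hand side $B$, again $A-B\le A'$. The third has left-hand side $2A$ and right-hand side $(A+B)+A'$, i.e.\ $A\le B+A'$, once more $A-B\le A'$. Since each of these rewrites is reversible, all three inequalities are equivalent to one another and to the conjecture.

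I do not expect a genuine obstacle here: the entire content lies in setting up the two derivative identities correctly and in the monotonicity observation that forces $\textbf{Pr}[\partial_k\text{Maj}_d=-1]=0$, which is precisely what makes $\sum_j\widehat{\text{Maj}}_d(j)$ and $\textbf{Inf}[\text{Maj}_d]$ interchangeable. The only point needing a little care is the bookkeeping of summation ranges — confirming that extending the $\text{Maj}_d$ sums from $d$ up to $n$ introduces only zero terms — but this is routine once the vanishing of $\partial_k\text{Maj}_d$ for $k>d$ is noted.
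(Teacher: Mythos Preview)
Your proof is correct and follows essentially the same approach as the paper: both derive the identities $\textbf{Pr}[\partial_i f = \pm 1] = \tfrac{1}{2}(\textbf{Inf}_i[f] \pm \widehat{f}(i))$ from the fact that $\partial_i f$ is $\{-1,0,1\}$-valued, and then reduce each displayed inequality to the conjecture by elementary algebra. You make explicit the monotonicity step yielding $\sum_j \widehat{\text{Maj}}_d(j) = \textbf{Inf}[\text{Maj}_d]$ (and the bookkeeping for $k>d$), which the paper leaves implicit in the phrase ``each of the assertions follow easily.''
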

\begin{proof}
Let 
$$ f(x)=\sum_{\substack{S\subseteq [n]}}\widehat{f}(S)\prod_{i\in S}x_{i},$$
and
$$\partial_{i}f(y):=\frac{f(y_{1},y_{2},\cdots ,1,y_{i+1},\cdots ,y_{n})-f(y_{1},y_{2},\cdots ,-1,y_{i+1},\cdots ,y_{n})}{2}$$
for all $y\in \{-1,1\}^{n-1}$ which takes values $+1,0,-1$. We have
$$ \textbf{Pr}[\partial_{i}f=0]=2^{-(n-1)}\sum_{y\in \{-1,1\}^{n-1}}(1+\partial_{i}f(y))\cdot (1-\partial_{i}f(y))=1-\textbf{Inf}_{i}[f],$$
$$\textbf{Pr}[\partial_{i}f=1]=2^{-(n-1)}\sum_{y\in \{-1,1\}^{n-1}}\frac{(\partial_{i}f(y)+1)\cdot \partial_{i}f(y)}{2}=\frac{1}{2}(\textbf{Inf}_{i}[f]+\widehat{f}(i)),$$
$$\textbf{Pr}[\partial_{i}f=-1]=2^{-(n-1)}\sum_{y\in \{-1,1\}^{n-1}}\frac{(\partial_{i}f(y)-1)\cdot \partial_{i}f(y)}{2}=\frac{1}{2}(\textbf{Inf}_{i}[f]-\widehat{f}(i)).$$
where we used the fact that $\widehat{f}(i)=\textbf{E}[(\partial_{i}f)].$ Each of the assertions follow easily from the above equations.
\end{proof}
\bibliographystyle{amsplain}
\bibliography{sample.bib}
\end{document}